\documentclass[conference,a4,10pt]{IEEEtran}
\usepackage{pgfplots}
\usepackage{cite}
\ifCLASSINFOpdf
\fi
\usepackage{amsmath,amsfonts,amsthm} 
\usepackage{amssymb}
\usepackage{bbm}
\usepackage{float}
\usepackage{graphicx}
\usepackage{algorithm}
\usepackage{algpseudocode}
\usepackage{subcaption}
\usepackage{siunitx}
\usepackage{tikz}
\usetikzlibrary{arrows,automata}
\usepackage{multicol}
\usepackage{comment}

\usepackage{pgfplots}
\usepackage{epstopdf}
\usepackage{multirow}
\usepackage[font=scriptsize,skip=5pt]{caption}
\usepackage{soul}
\usepackage{todonotes}
\usepackage{color, colortbl}
\usepackage{array}
\usetikzlibrary{patterns}
\captionsetup[algorithm]{font=footnotesize}
\captionsetup[subfigure]{font=scriptsize}
\usepackage{balance}

\usepackage{enumitem}
\usepackage{footmisc}
\allowdisplaybreaks
\usetikzlibrary{shapes.geometric}

\graphicspath{{images/}}
\usepackage{balance}
\newtheorem{theorem}{Theorem}

\makeatletter
	{\end{proof}%
	\renewcommand{\qedsymbol}{\qedsymbol}}
\makeatother

\graphicspath{{images/}}

\DeclareMathAlphabet\mathbfcal{OMS}{cmsy}{b}{n}

\definecolor{Gray}{gray}{0.9}
\definecolor{LightCyan}{rgb}{0.88,1,1}

\algnewcommand{\IIf}[1]{\State\algorithmicif\ #1\ \algorithmicthen}
\algnewcommand{\IElse}[2]{\State\algorithmicelse\ #2\ }
\algnewcommand{\EndIIf}{\unskip\ \algorithmicend\ \algorithmicif}

\setlength{\textfloatsep}{5pt}
\IEEEoverridecommandlockouts
\begin{document}
	\setlength{\parskip}{0em}
	
	\title{Importance-Aware Fresh Delivery of Versions over Energy Harvesting MACs
 \thanks{This work was supported by the Science and Engineering Research Board, Department of Science and Technology, Government of India, under Project SRG/2020/001545.}}
		 
\author{\IEEEauthorblockN{Gangadhar Karevvanavar\IEEEauthorrefmark{1},
		Rajshekhar V Bhat\IEEEauthorrefmark{2}}
	
	\IEEEauthorblockA{
		\IEEEauthorrefmark{1}\IEEEauthorrefmark{2}Indian Institute of Technology Dharwad, Dharwad, Karnataka,  India\\
	 \IEEEauthorrefmark{1}212021007@iitdh.ac.in, \IEEEauthorrefmark{2}rajshekhar.bhat@iitdh.ac.in} 
}

	\maketitle
	\IEEEpeerreviewmaketitle
\begin{abstract}
We consider a scenario  where multiple users, powered by energy harvesting, send version updates over a fading multiple access channel (MAC) to an access point (AP). Version updates having random importance weights arrive at a user according to an exogenous arrival process, and  a new version renders all previous versions obsolete.
As   energy harvesting imposes a time-varying  peak power constraint, it is  not possible to deliver all the bits of a version instantaneously.
Accordingly, the AP chooses the objective of minimizing a finite-horizon time average  expectation of the product of importance weight  and a convex increasing function of the number of remaining bits of a version to be transmitted at each time instant. The objective enables \emph{importance-aware} delivery of  as many bits, as soon as possible.  
In this setup, the AP optimizes the objective function subject to an achievable rate-region constraint of the MAC and energy constraints at the users, by  deciding the transmit power and the number of bits to be transmitted by each user. We obtain a Markov Decision Process (MDP)-based optimal online policy to the problem and derive structural properties of the policy. 
We then develop a neural network (NN)-based online heuristic policy, for which we train an NN on the optimal offline policy derived for different sample paths of energy, version arrival and channel power gain processes. 
Via numerical simulations, we observe that the NN-based online  policy performs competitively with respect to the  MDP-based online policy. 

\end{abstract}

\section{Introduction}
Many upcoming applications of beyond fifth generation  ultra-reliable and low-latency communications  and sixth generation  communication networks require energy-efficient importance/context/semantic/goal-aware fresh delivery of information from sources to destinations \cite{Sem-NP,Sem-WC,Sem-Popovski}. Towards achieving this, efforts in several directions are being carried out. This includes the  proposal and optimization of metrics  for measuring freshness of information,  such as  age of information (AoI) and its variants including version AoI \cite{AoG, VAoI}, urgency of information \cite{UoI} and age of incorrect information/estimates \cite{AoII,AoIE}. The efforts also include 
development of theory for semantic communications \cite{Sem-Theory}, deriving semantic-aware transmission policies \cite{Sem-Game,Sem-Geofrrey,Sem-bitrate} and semantic-aware communication of text \cite{Sem-NLP}, speech \cite{Sem-Speech} and image \cite{Sem-Image} data. Moreover, due to the expected large number of deployments of communication devices, most emerging networks may be powered from energy harvesting sources. Accordingly, AoI and  semantic-aware communications in energy harvesting systems  has also been considered \cite{Sem-EH-Poor-ICASSP,Sem-EH-Sun-ICASSP,AoI-ICC-EH,AoI-EH-Ulukus,AoI-IT-Harpreet, Sem-EH-ISIT}, including   communications over point-to-point \cite{BhatP2P}, multiple access \cite{AoI-MAC-NOMABhat, AoI-MAC-NOMA} and broadcast \cite{AoI-ICC-EH} channels.  

Motivated by the above works, in this work, we consider a multiple access channel (MAC) where the versions of fixed size but random importances arrive at random time instants at the users, equipped with batteries that are replenished using energy harvesting sources. The previous version becomes obsolete when a new version arrives. In many practical applications, such as real-time video communication, even a small  number of bits of a version can provide useful description of the version; the more the number of bits transmitted (or equivalently, the smaller the number of un-transmitted bits), the better is the description \cite{layered}.  
Moreover, the more delayed the transmission of a set of bits is, the less useful is the description derived from them. Hence, for achieving more useful description of a version with less delay at the AP,  the users are required to deliver as many bits, as soon as possible.   
To achieve this, we consider and optimize the following metric:  \emph{the product of the importance weight and a convex increasing function of the number of remaining bits of a version to be transmitted}. Due to convexity and increasing behavior of the function, optimizing it across slots enables delivery of more bits with less delay. 

The existing literature considers importance-aware  communications, where the importance of versions are derived implicitly  \cite{Sem-AoII-Ephremides, AoIE}  or considered to be specified by the external source that generates the versions   \cite{UoI}.  In this work, we consider the latter, i.e., we consider that the exogenous source itself specifies the importance weight for each version it generates.   We also consider that the users communicate over a fading  MAC using the non-orthogonal multiple access (NOMA), as in \cite{AoI-MAC-NOMABhat},   where more than one user can transmit information simultaneously, subject to an achievable rate region constraint of the MAC.

Optimization of communication systems for fresh delivery of information has been approached via different frameworks,  including the Markov Decision Process (MDP) \cite{AoIE}, reinforcement learning (RL) \cite{Sem-RF} and Lyapunov  optimization \cite{AoPI}. In this work, we adopt the MDP framework for obtaining the optimal online policy. We also propose a neural network (NN)-based policy, motivated by \cite{ICASSP_Mohit}. The work \cite{ICASSP_Mohit} first obtains an offline optimal policy for a throughput maximization problem in an energy harvesting MAC,  assuming complete non-causal knowledge of realizations of random processes involved. Then, it trains a neural network (NN) to output the instantaneous transmit power values for given state of the system, which includes the state of the batteries and channel power gains. The authors show that the NN-based policy performs competitively to the online optimal policy obtained using the MDP framework. Along the similar lines, in this work, we consider an offline policy and train an NN to obtain a heuristic online policy.  
As mentioned, in this work, we consider  the problem of  minimizing the finite-horizon expected average of the product of importance weight  and a convex increasing function of the number of un-transmitted bits of  versions in a MAC using NOMA, subject to an achievable rate region and variable peak energy constraints occurring due to the energy harvesting power supply. The  decision variables are the transmit power and the number of bits to be transmitted at each user and time instant.  Towards solving this, our  main contributions are the following: 
\begin{itemize}
	\item We obtain an MDP-based optimal online policy, via iterative minimization of the  Bellman equations subject to instantaneous achievable rate region and energy constraints, with states being the amount of energy available, the number of bits to be transmitted, channel power gains and the importance weights of all the users, and the actions being the transmit power and the number of bits to be transmitted. 	We show that at each iteration, the minimization of Bellman equation is a convex optimization problem. Using this result, we showed that the optimal transmit power and the number of bits to be transmitted by  a user are non-decreasing in the amount of energy available in the battery  (respectively, the number of remaining bits) for a fixed number of remaining bits to be transmitted (respectively, fixed amount of energy available in the battery), when all other terms in the states and actions are fixed. 
	\item We then develop an NN-based online heuristic policy, for obtaining which  we train a neural network on the optimal offline policy derived for different sample paths of energy,  version arrival and channel power gain processes. The neural network takes the instantaneous states as the input and outputs the actions of all the users.   The considered problem in the offline case happens to be a convex optimization problem that can be solved easily using standard numerical techniques. This helps in  generating a large amount of data required for training the neural network.  
 
	\item Via numerical simulations, we show that the NN-based policy performs competitively with respect to the MDP-based  optimal online policy obtained by exploiting the derived structural properties. 
	
\end{itemize}
 
The remainder of the paper is organized as follows. We
present the system model and problem formulation in  Section~\ref{sec:sys_model} and propose the optimal online policy and  an NN-based heuristic online policy derived via the optimal offline policy in  Section~\ref{sec:solution}. We present
numerical results in Section~\ref{sec:num} and conclude in Section \ref{sec:conclusions}.

\section{System Model and Problem Formulation}\label{sec:sys_model}
We consider $M$ users communicating version updates to a common AP over a block-fading MAC using NOMA strategy. The system is assumed to be time-slotted, where each slot (block) is of unit time duration and the slots are indexed by $t\in \{1,2,\ldots,T\}$ for some finite positive integer, $T$. The users are powered from energy harvesting sources. In this section, we present energy arrival and channel models,   the rate-power relationship, the version arrival model, a description of the system operation and the problem formulation.

\subsection{Energy and Channel Model and Rate-Power Relationship}
At the start of slot $t$, a packet of harvested energy of $E_i(t)$ units arrives at user $i\in \mathcal{M}\triangleq \{1,2,\ldots,M\} $, where $E_i(1),E_i(2),\ldots,$ is a sequence of independent and identically distributed (i.i.d.) random variables having distribution identical to a random variable $E_i$, distributed  over a finite set, $\mathcal{E}_i\subset\mathbb{R}^+$, where $\mathbb{R}^+$ is a set of non-negative real numbers. We model the wireless channel as a block-fading channel, where the channel power gain of a user remains constant over a block  and changes across the blocks independently, as in \cite{Bhat17,Bhat19}. 
In slot $t$, let the channel power gain  be $H_i(t)$, where  ${H_i(1), H_i(2),\ldots,}$ is a  sequence of i.i.d. random variables distributed as $H_i$ over a finite set,  $\mathcal{H}_i\subset\mathbb{R}^+$.  
Suppose the transmit power of user $i\in \mathcal{M}$ is $P_i$ and the channel power gain is $h_i$, then any rate tuple, $(\rho_1,\ldots,\rho_M)$ that satisfies
\begin{align}
\sum_{i\in \mathcal{S}}\rho_i \leq g\left(\sum_{i\in \mathcal{S}}h_iP_i\right),
\end{align}
for all $\mathcal{S}\subseteq \mathcal{M}$ is achievable by adopting superposition coding at the users and successive interference cancellation at the AP.  
Here, $g(\cdot)$ is a concave, monotonically increasing function and $g(0)=0$.

\subsection{Version Arrival Model}
Let $A_i(t)$ be an indicator variable which takes value $1$ if the $i^{\rm th}$ source device gets a new version and zero otherwise, in slot $t$, where  $A_i(1), A_i(2),\ldots,$ forms a sequence of i.i.d. Bernoulli random variables. Let $w_i(t)$ be the importance weight associated with the version that arrives in slot $t$. We consider that  $w_i(t) = w_i(t-1)(1-A_i(t)) + WA_i(t)$ for all $t=1,2,\ldots,T$,  where $W$ is a discrete random variable distributed over a finite set, $\mathcal{W}\subset \mathbb{R}^+$, and $w_{i}(0)=0$. 

Each version consists of  $r_{\rm max}$ bits. The arrived bits may be transmitted over multiple slots until a next version arrives, at which time, the current version becomes obsolete and  all its remaining bits are discarded and replaced with the fresh $r_{\rm max}$ bits. 
Let $\rho_i(t)$ be the number of bits transmitted in slot $t$. Then, the number of remaining bits to be transmitted at the start of slot $t$, evolves as follows: 
\begin{align}\label{eq:bit-evolution}
r_i(t)= 
\begin{cases}
r_i(t-1)- \rho_i(t-1)& \text{if $A_i(t)=0$},\\
r_{\rm max}            & \text{if $A_i(t)=1$}, 
\end{cases}
\end{align}
where $r_i(0)=0$ and $\rho_i(t) \leq r_i(t)$ for all $t=1,2,\ldots,T$ and $i\in \mathcal{M}$. 
Our objective is minimizing the sum of long-term average expectation of the product of the importance weight and a convex increasing function of the number of remaining bits to be transmitted, across all the users. That is,  our objective function is given by $(1/TM)\sum_{t=1}^T\sum_{i=1}^{M} \mathbb{E}[w_i(t)f(r_i(t)-\rho_i(t))]$, where $f(\cdot)$ is a convex increasing function of its argument, with $f(0)\geq 0$ and $f(r_{\rm max})=1$.   

\subsection{System Operation and Constraints}
At the start of  slot $t$, the  AP observes the channel power gain realization, $\mathbf{h}(t) =(h_1(t),\ldots,h_M(t))$, the importance weights, $\mathbf{w}(t) =(w_1(t),\ldots,w_M(t))$ the amount of energy stored in the batteries, $\mathbf{B}(t) = (B_1(t),\ldots, B_M(t))$ and the number of remaining bits, $\mathbf{r}(t) = (r_1(t),\ldots, r_M(t))$. It then decides the  transmit power  $\mathbf{P}(t) = (P_1(t),\ldots, P_M(t))$ and the number of bits to be transmitted $\boldsymbol{\rho}(t) = (\rho_1(t),\ldots, \rho_M(t))$. The battery evolves as follows:  
\begin{align}\label{eq:battery-evolution}
&B_{i}(t) =\min(B_{i}(t-1)-P_i(t-1)+E_i(t), B_{\rm max}),
\end{align} 
for $i\in \mathcal{M}$ and $t \in \{1,2,\ldots,T\}$, where   $B_i(0)=0$ and $P_i(t)\leq B_i(t)$. Here,  $B_i(t)$ is the amount of energy stored in the battery at the start of slot $t$, and $B_{\rm max}$ is the maximum amount of energy that can be stored in the battery of the users.

\subsection{Problem Formulation}
In this work, we are interested in minimizing 
the finite-horizon expected average of the product of importance weight  and a convex increasing function of the number of un-transmitted bits of  versions which can be accomplished by solving the following optimization problem: 
\begin{subequations}\label{eq:main-opt-problem}
\begin{align}
\underset{\pi}{\text{min}} &\;\;\;\; \frac{1}{TM}\sum_{t=1}^{T}\sum_{i=1}^{M}\mathbb{E}[w_i(t)f(r_i(t) - \rho_i(t))],   \;\;\label{eq:AoI}\\
	\text{subject to}&\; \;\eqref{eq:bit-evolution}, \eqref{eq:battery-evolution},\nonumber \\
	&\;\; \sum_{i\in \mathcal{S}}\rho_i(t) \leq g\left(\sum_{i\in \mathcal{S}}h_i(t)P_i(t)\right),\; \forall\mathcal{S}\subseteq \mathcal{M},\label{eq:capacity}\\
	&\;\; P_i(t) \leq B_i(t),\; \forall i\in \mathcal{M},\label{eq:power}\\
	&\;\; \rho_i(t)\leq r_i(t),\;  \forall i\in \mathcal{M},\label{eq:bits}
	\end{align}
\end{subequations}
for all $t\in\{1,2,\ldots,T\}$, where \eqref{eq:capacity} is the instantaneous MAC achievable rate region constraint and the constraints \eqref{eq:power} and \eqref{eq:bits} self-explanatory. The policy $\pi$ is the decision rule to select $P_i(t)$ and $\rho_i(t)$ for each user $i\in \mathcal{M}$ and time instant $t\in \{1,2,\ldots,T\}$. The expectation in \eqref{eq:AoI} is with respect to the randomness in energy arrival, channel power gain and version arrival processes. 
In the current work, we only consider the finite-horizon case, with a finite $T$. 

\section{Solution}\label{sec:solution}
In this section, we provide various solutions to  \eqref{eq:main-opt-problem}. We first cast \eqref{eq:main-opt-problem} under the MDP framework and obtain structural properties of the optimal solution. We then obtain an NN-based heuristic online policy, that is trained using the optimal solution in the offline case when all the future realizations of the random processes are known non-causally. 
\subsection{Optimal Online Solution and Structural Properties}
The optimization problem in \eqref{eq:main-opt-problem} has form of an MDP.  We consider each tuple $s(t)$, where $s(t) = (\mathbf{B}(t), \mathbf{r}(t), \mathbf{h}(t),  \mathbf{w}(t))\in [0,B_{\rm max}]^M \times [0, r_{\rm max}]^M \times \mathcal{H}^M \times \mathcal{W}^M$  as a state, and an action is given by $a(t) = (\mathbf{P}(t), \boldsymbol{\rho}(t))$ at time $t\in \{1,2,\ldots,T\}$. The actions are subject to the constraints in \eqref{eq:capacity}, \eqref{eq:power} and \eqref{eq:bits} at each time instant. The instantaneous cost associated with the MDP is  $\sum_{i=1}^{M}\mathbb{E}[w_i(t)f(r_i(t)-\rho_i(t))]$ at time $t$.  The state transition probability matrix can be generated based on the distributions of the energy and version arrivals and the channel model.

Given an initial state, $s_1=(\mathbf{B}_1, \mathbf{r}_1, \mathbf{h}_1, \mathbf{w}_1)$, the optimal objective value, given by ${1}/(MT)V_1(s_1)$, can be computed via backward recursion \cite{Rui-MDP}, starting from time step $t=T$, $t=T-1$ and so on. For $t = T$, we solve the following Bellman equation subject to constraints:
\begin{subequations}\label{eq:atT}
	\begin{align}
	&V_T(\mathbf{B}(T), \mathbf{r}(T),\mathbf{h}(T), \mathbf{w}(T))=\nonumber&&\\
	&\min_{\mathbf{P}(T), \boldsymbol{\rho}(T)}\sum_{i=1}^{M}\mathbb{E}[w_i(T)f(r_i(T) - \rho_i(T))],\label{eq:atTO}&&\\
	&\text{subject to} \; \sum_{i\in \mathcal{S}}\rho_i(T) \leq g\left(\sum_{i\in \mathcal{S}}h_i(T)P_i(T)\right), &&\\
	&\;\;\qquad \qquad 0\leq P_i(T) \leq B_i(T),\; \forall i\in \mathcal{M},\label{eq:batT}&&\\
	&\;\; \qquad \qquad 0\leq \rho_i(T)\leq r_i(T),\; \forall i\in \mathcal{M},&&
	\end{align}
\end{subequations}
and for $t=T-1,T-2,\ldots,1$, we solve the following: 
\begin{subequations}\label{eq:at_t}
\begin{align}
&V_{t}(\mathbf{B}(t), \mathbf{r}(t),\mathbf{h}(t), \mathbf{w}(t))=\nonumber&&\\
&\min_{\substack{\mathbf{P}(t),\boldsymbol{\rho}(t)}}Q_{t}\left(\mathbf{B}(t), \mathbf{r}(t),\mathbf{h}(t), \mathbf{w}(t),\mathbf{P}(t),\boldsymbol{\rho}(t)\right),&&\\
&\text{subject to}\; \sum_{i\in \mathcal{S}}\rho_i(t) \leq g\left(\sum_{i\in \mathcal{S}}h_i(t)P_i(t)\right),&&\label{eq:cap1}\\
&\;\;\qquad\qquad 0\leq P_i(t) \leq B_i(t),\; \forall i\in \mathcal{M},&&\\
&\;\; \qquad \qquad 0\leq \rho_i(t)\leq r_i(t),\; \forall i\in \mathcal{M},  && 
\end{align}
\end{subequations}
for all $\mathcal{S}\subseteq \mathcal{M}$, where 
\begin{align}\label{eq:Qfunction}
&Q_{t}(\mathbf{B}(t), \mathbf{r}(t),\mathbf{h}(t), \mathbf{w}(t),\mathbf{P}(t),\boldsymbol{\rho}(t))\nonumber\\
& =  \sum_{i=1}^{M}\mathbb{E}[w_i(t)f(r_i(t)-\rho_i(t))]\nonumber\\
&\quad+\mathbb{E}[V_{t+1}\left(\mathbf{B}(t+1),\mathbf{r}(t+1), \mathbf{h}(t+1), \mathbf{r}(t+1)\right)], 
\end{align}
for $t\in\{1,2,\ldots,T-1\}$, where  $r_i(t)$ and $B_i(t)$ evolve according to   \eqref{eq:bit-evolution} and \eqref{eq:battery-evolution}, respectively.  
%

We have the following result. 
\begin{theorem}\label{thm:convex}
The problem \eqref{eq:atT} for $t=T$ and \eqref{eq:at_t} for each $t\in \{1,\ldots,T-1\}$ are  convex. Moreover, for each $t\in \{1,2,\ldots,T\}$, $V_{t}(\mathbf{B}(t), \mathbf{r}(t),\mathbf{h}(t), \mathbf{w}(t))$ is convex in $(\mathbf{B}(t), \mathbf{r}(t))$. 
\end{theorem}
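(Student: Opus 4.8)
The plan is to prove a slightly stronger statement by backward induction on $t$, carrying along an auxiliary monotonicity property that is what actually closes the induction. Concretely, I would show that for every $t$: (i) the minimization in \eqref{eq:atT} / \eqref{eq:at_t} is a convex program; (ii) $V_t(\mathbf{B}(t),\mathbf{r}(t),\mathbf{h}(t),\mathbf{w}(t))$ is jointly convex in $(\mathbf{B}(t),\mathbf{r}(t))$; and (iii) $V_t$ is non-increasing in each $B_i(t)$. Property (iii) is not part of the stated result, but it is exactly the ingredient needed to propagate convexity through the battery update, so I would fold it into the induction hypothesis.

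For the base case $t=T$, the objective $\sum_i w_i(T)f(r_i(T)-\rho_i(T))$ is jointly convex in $(\mathbf{r}(T),\boldsymbol{\rho}(T))$: each $r_i(T)-\rho_i(T)$ is affine, $f$ is convex and increasing, and the weights $w_i(T)\ge 0$, so the composition-with-affine and nonnegative-combination rules apply. The box constraints $0\le P_i\le B_i(T)$ and $0\le\rho_i\le r_i(T)$ are linear, hence jointly convex in the state--action variables; and writing the rate-region constraint as $\sum_{i\in\mathcal S}\rho_i-g\!\left(\sum_{i\in\mathcal S}h_i P_i\right)\le 0$, its left-hand side is linear in $\boldsymbol{\rho}$ minus a concave function of $\mathbf{P}$ (since $g$ is concave and its argument is linear in $\mathbf{P}$), hence jointly convex, so the feasible set is convex. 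This establishes (i) at $t=T$. Since the objective is jointly convex and the feasible set is jointly convex in the state--action variables, partial minimization over the action yields convexity of $V_T$ in $(\mathbf{B}(T),\mathbf{r}(T))$, giving (ii); and because raising $B_i(T)$ only relaxes $P_i\le B_i(T)$ while leaving the objective untouched, the optimal value cannot increase, giving (iii).

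For the inductive step, assume (ii)--(iii) at $t+1$. I would first show $Q_t$ is jointly convex in $(\mathbf{B}(t),\mathbf{r}(t),\mathbf{P}(t),\boldsymbol{\rho}(t))$. The stage cost is handled as in the base case. For the cost-to-go $\mathbb{E}[V_{t+1}(\mathbf{B}(t+1),\mathbf{r}(t+1),\mathbf{h}(t+1),\mathbf{w}(t+1))]$, note from \eqref{eq:bit-evolution} that each $r_i(t+1)$ is affine in $(r_i(t),\rho_i(t))$, while from \eqref{eq:battery-evolution} each $B_i(t+1)=\min(B_i(t)-P_i(t)+E_i(t+1),B_{\rm max})$ is concave in $(B_i(t),P_i(t))$. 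Applying the vector-composition rule---$V_{t+1}$ convex and non-increasing in its $B$-arguments by (iii), composed with concave maps there, and composed with affine maps in the $r$-arguments---shows the integrand is jointly convex, and taking expectation (a nonnegative combination) preserves this. With the same convex constraint set as in the base case, this proves (i) at $t$, and partial minimization over the action then yields (ii).

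Finally, (iii) at $t$ follows because raising $B_i(t)$ both relaxes $P_i(t)\le B_i(t)$, enlarging the feasible set, and, through $B_i(t+1)$ being non-decreasing in $B_i(t)$ together with (iii) for $V_{t+1}$, can only decrease $Q_t$ pointwise; minimizing a pointwise-smaller objective over a larger set gives a value that is no larger. I expect the main obstacle to be precisely the $\min(\cdot,B_{\rm max})$ in the battery update: it makes $B_i(t+1)$ concave rather than affine in $(B_i(t),P_i(t))$, so the plain convex-with-affine composition fails and one must invoke the monotonicity property (iii). Identifying and propagating that non-increasing-in-battery property is the crux; everything else reduces to standard convexity-preserving operations---affine precomposition, nonnegative combination, partial minimization, and composition of a non-increasing convex outer function with a concave inner map.
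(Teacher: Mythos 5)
Your proposal is correct and follows essentially the same route as the paper's proof: backward induction that strengthens the claim by carrying monotonicity of $V_t$ in the battery state, the vector-composition rule (convex non-increasing outer function with the concave battery update, affine composition for the bit update), and convexity preservation under minimization (your partial-minimization step is equivalent to the paper's perturbed-problem argument). If anything, you are slightly more careful than the paper, which describes $B_i(t+1)$ as ``linear (hence concave)'' in $P_i(t)$ while glossing over the $\min(\cdot,B_{\rm max})$ clipping that you correctly identify as the reason concavity plus the non-increasing property (iii) is needed; you also note, correctly, that monotonicity in $\mathbf{r}$ (which the paper carries in its hypothesis) is not actually required since that composition is affine.
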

\begin{proof}
	We prove the theorem by induction on the backward recursion  mentioned above. At time step $T$, we solve  \eqref{eq:atT}. In \eqref{eq:atT}, the objective function   $f(\cdot)$ is convex in its argument and the argument is an  affine decreasing function of $\boldsymbol{\rho}(T)$. Since affine transformation of variables preserves convexity \cite{boyd2004convex},  \eqref{eq:atTO}  is convex. Moreover, all the constraints are trivially convex. Hence, \eqref{eq:atT} is a convex optimization problem with optimal objective value $V_T(\mathbf{B}(T), \mathbf{r}(T),\mathbf{h}(T), \mathbf{w}(T))$. Now, since the optimal objective value of a perturbed problem is convex in the perturbed variables \cite{boyd2004convex}, we can conclude that $V_T(\mathbf{B}(T), \mathbf{r}(T),\mathbf{h}(T), \mathbf{w}(T))$ is convex in $(\mathbf{B}(T), \mathbf{r}(T))$. Moreover, $V_T(\cdot)$ is non-increasing in $\mathbf{B}(T)$ because more energy can only enable transmission of more bits, and it is non-decreasing in $\mathbf{r}(T)$ because  the larger the number of bits to be transmitted, the higher the objective value. 
	
	Consider the induction hypothesis that  $V_t\left(\mathbf{B}(t),\mathbf{r}(t), \mathbf{h}(t), \mathbf{r}(t)\right)$ is convex    non-decreasing in $\mathbf{r}(t)$ and convex    non-increasing in  $\mathbf{B}(t)$ for any $t\in \{T,T-1, \ldots,1\}$. Then, at time step $t$,  we solve \eqref{eq:at_t}. Due to linearity of expectation, we have, $\mathbb{E}[V_t\left(\mathbf{B}(t),\mathbf{r}(t), \mathbf{h}(t), \mathbf{r}(t)\right)]$ is convex, where $\mathbf{r}(t) =  \mathbf{r}(t-1)-\boldsymbol{\rho}(t-1)$ with probability  $\mathbb{P}(A_i(t)=0)$ and $\mathbf{r}(t) =  r_{\rm max}$ with probability $\mathbb{P}(A_i(t)=1)$. 
	From Section 3.2.4 in \cite{boyd2004convex}, we note that composition of a convex non-decreasing (respectively, non-increasing) function with a convex (respectively, concave) function results in a convex function. 	
	Since $\mathbb{E}[V_t\left(\mathbf{B}(t),\mathbf{r}(t), \mathbf{h}(t), \mathbf{r}(t)\right)]$ is convex non-decreasing in $\mathbf{r}(t)$ and because 
	$\mathbf{r}(t)$ is affine (hence, convex) in $\boldsymbol{\rho}(t-1)$, $\mathbb{E}[V_t\left(\mathbf{B}(t),\mathbf{r}(t), \mathbf{h}(t), \mathbf{r}(t)\right)]$ is convex in $\boldsymbol{\rho}(t-1)$. 
	Similarly,  $\mathbb{E}[V_t\left(\mathbf{B}(t),\mathbf{r}(t), \mathbf{h}(t), \mathbf{r}(t)\right)]$ is convex non-increasing in $\mathbf{B}(t)$ and  as $B_i(t)$ is a linear (hence concave) function in $P_i(t-1)$, we have that $\mathbb{E}[V_t\left(\mathbf{B}(t),\mathbf{r}(t), \mathbf{h}(t), \mathbf{r}(t)\right)]$ is convex in $\mathbf{P}(t-1)$.  Moreover, the first term, $f(r_i(t)-\rho_i(t))$ in \eqref{eq:Qfunction} is convex in $\rho_i(t)$. Hence, \eqref{eq:at_t} is a convex optimization problem.  Since the optimal objective value of a perturbed problem is convex in the perturbed variables \cite{boyd2004convex}, we conclude that  $V_{t-1}(\mathbf{B}(t-1), \mathbf{r}(t-1),\mathbf{h}(t-1), \mathbf{w}(t-1))$ is convex in $(\mathbf{B}(t-1),\mathbf{r}(t-1))$. The non-increasing and non-decreasing behavior of $V_{t-1}(\cdot)$ with respect to $\mathbf{B}(t-1)$, and $\mathbf{r}(t-1)$, respectively, follow due to the arguments presented for the $t=T$ case, above. 
Hence, the result follows via induction for $t=T-1,\ldots, 1$. 
\end{proof}

Based on the above theorem, we obtain the below result.  We define $\mathbf{x}_{-i}\triangleq (x_1,\ldots,x_{i-1}, x_{i+1},\ldots, x_M)$. 

\begin{theorem}\label{thm:monotonic}
	For given $\mathbf{h}(t)$, $\mathbf{w}(t)$, $\mathbf{B}_{-i}(t)$, $\mathbf{r}_{-i}(t)$, $\mathbf{P}_{-i}(t)$ and $\boldsymbol{\rho}_{-i}(t)$, the optimal solution to \eqref{eq:at_t}, $P_i(t)$ and $\rho_i(t)$ are non-decreasing in $B_i(t)$ for  fixed $r_i(t)$ and non-decreasing in $r_i(t)$ for  fixed $B_i(t)$. 
\end{theorem}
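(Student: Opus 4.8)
The plan is to combine the convexity/monotonicity established in Theorem~\ref{thm:convex} with a monotone-comparative-statics (Topkis-type) argument, after first collapsing the two-dimensional per-user decision $(P_i(t),\rho_i(t))$ to a single scalar decision in $P_i(t)$. The key opening observation is that, with every other user's quantity held fixed, $\rho_i(t)$ enters $Q_t$ in \eqref{eq:Qfunction} \emph{only} through the current cost $w_i(t)f(r_i(t)-\rho_i(t))$ and through the bit-coordinate $r_i(t)-\rho_i(t)$ of $V_{t+1}$ in the no-arrival branch, and never through any power or battery term (the battery update depends on $P_i(t)$ alone). Since $f$ is increasing and, by Theorem~\ref{thm:convex}, $V_{t+1}$ is non-decreasing in its bit coordinate, $Q_t$ is non-increasing in $\rho_i(t)$ for every fixed $P_i(t)$. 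Hence at optimality $\rho_i(t)$ is driven to its largest feasible value, $\rho_i^\star=\min\{\bar\rho_i(P_i(t)),\,r_i(t)\}$, where $\bar\rho_i(P_i(t))\triangleq\min_{\mathcal S\ni i}\{g(h_i(t)P_i(t)+\sum_{j\in\mathcal S\setminus\{i\}}h_j(t)P_j(t))-\sum_{j\in\mathcal S\setminus\{i\}}\rho_j(t)\}$ is the tightest rate-region bound induced by the fixed other users; because $g$ is increasing, $\bar\rho_i$ is non-decreasing in $P_i(t)$.

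Substituting this relation reduces \eqref{eq:at_t} for user $i$ to a scalar problem $\min_{0\le P_i(t)\le B_i(t)}\Psi(P_i(t);B_i(t),r_i(t))$, so it suffices to prove that the minimizer $P_i^\star$ is non-decreasing in $B_i(t)$ (for fixed $r_i(t)$) and in $r_i(t)$ (for fixed $B_i(t)$). The claimed monotonicity of $\rho_i^\star=\min\{\bar\rho_i(P_i^\star),r_i(t)\}$ then follows at once, as it is a minimum of functions each non-decreasing in the relevant parameter. For the scalar problem I would invoke the standard fact that a minimizer is non-decreasing in a parameter $\theta$ whenever the objective has decreasing differences in $(P_i(t),\theta)$, i.e.\ the marginal cost of raising $P_i(t)$ is non-increasing in $\theta$, and the feasible interval is ascending in $\theta$. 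Here $[0,B_i(t)]$ is ascending in $B_i(t)$ and independent of $r_i(t)$, so the crux is the decreasing-differences property.

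To establish decreasing differences, I would examine the only channels coupling $P_i(t)$ to each parameter. For $\theta=B_i(t)$, the coupling is through the continuation battery $\min(B_i(t)-P_i(t)+E_i(t+1),B_{\rm max})$, and convexity of $V_{t+1}$ in its battery coordinate (Theorem~\ref{thm:convex}) makes the marginal cost of $P_i(t)$ non-increasing in $B_i(t)$; the relaxation of the constraint $P_i(t)\le B_i(t)$ pushes in the same direction. For $\theta=r_i(t)$, the coupling is through $f(r_i(t)-\bar\rho_i(P_i(t)))$ and the bit-coordinate of $V_{t+1}$; convexity of $f$ and of $V_{t+1}$ in the bit coordinate, together with $\bar\rho_i'\ge 0$, again yield a non-increasing marginal cost. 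Informally these are the cross-derivative signs $\partial^2\Psi/\partial P_i(t)\partial B_i(t)\le 0$ and $\partial^2\Psi/\partial P_i(t)\partial r_i(t)\le 0$.

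The step I expect to be the main obstacle is making this rigorous across the kinks introduced by the two $\min$ operations—the saturation $\min(\cdot,B_{\rm max})$ of the battery and the cap $\min(\bar\rho_i(P_i(t)),r_i(t))$ on the bits—since the reduced objective $\Psi$ need not be differentiable (nor even convex) at these points. I would therefore verify submodularity in the order-theoretic form, checking the decreasing-differences inequality directly on pairs of power levels and on each branch (rate-limited versus bit-limited, saturated versus unsaturated battery), using convexity/monotonicity of $V_{t+1}$ and $f$ branchwise and continuity to patch them together at the kinks. Existence of a minimizer follows from compactness of $[0,B_i(t)]$ and continuity of $\Psi$, and if minimizers are non-unique the monotone statement is made precise by selecting the smallest minimizer, which the monotone-comparative-statics theorem guarantees to be non-decreasing in the parameter.
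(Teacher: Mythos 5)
Your proposal is correct in substance and runs on the same engine as the paper's own proof---monotone comparative statics (decreasing/increasing differences) powered by the convexity and coordinate-wise monotonicity of $V_{t+1}$ from Theorem~\ref{thm:convex}---but the mechanics are genuinely different. The paper keeps the joint per-user action $(P_i,\rho_i)$ and proves a two-point inequality on $Q_t$ directly (its chain \eqref{eq:ineq}--\eqref{eq:Vin}), treating $\rho_i'$ as a concave increasing function of $P_i'$ along the boundary of the rate region and invoking the convex-increment inequality $c(y+\delta)-c(y)\geq c(x+\delta)-c(x)$ with $y=B-P$, $x=B-P'$, $\delta=B'-B$, following Theorem 3 of the cited reference; the case $t=T$ is handled separately by a constraint-relaxation remark. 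You instead eliminate $\rho_i$ first---noting $Q_t$ is non-increasing in $\rho_i$ because $f$ is increasing and $V_{t+1}$ is non-decreasing in its bit coordinate, so $\rho_i^\star=\min\{\bar\rho_i(P_i),r_i\}$---and then apply Topkis to the reduced scalar problem in $P_i$. Your route makes explicit several things the paper leaves implicit: which rate-region constraint binds (your $\min$ over $\mathcal{S}\ni i$, where the paper simply asserts ``$\rho_i'$ is a concave increasing function of $P_i'$''), a uniform treatment of $t=T$ (zero continuation, monotonicity in $B_i$ then coming from the ascending feasible interval alone), the battery-saturation kink $\min(\cdot,B_{\rm max})$, and tie-breaking via the smallest-minimizer selection. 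The paper's version is shorter and avoids introducing the reduced objective $\Psi$.

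One caution, which applies equally to the paper's proof, so it does not put you below its standard of rigor: the delicate point in both arguments is that $P_i$ moves \emph{both} coordinates of $V_{t+1}$ simultaneously (battery via $B_i-P_i$, bits via $r_i-\bar\rho_i(P_i)$). Your channel-by-channel verification of decreasing differences (battery coupling for $\theta=B_i(t)$, bit coupling for $\theta=r_i(t)$) silently drops the mixed term: e.g., for $\theta=B_i(t)$, the marginal cost of $P_i$ also contains the bit-channel contribution, and its variation with $B_i$ is governed by the cross-sensitivity of $V_{t+1}$ in its (battery, bit) pair, whose sign is \emph{not} controlled by the joint convexity and monotonicity that Theorem~\ref{thm:convex} provides. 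The paper compresses exactly this issue into its composition-convexity step before applying the one-dimensional increment inequality, where the bit coordinates on the two sides of \eqref{eq:Vin} differ. So when you carry out your planned branchwise verification, do it jointly in the two coordinates---for instance by propagating a suitable sub/supermodularity property of $V_t$ in $(B_i,r_i)$ through the backward induction---rather than patching the two channels independently; otherwise the Topkis hypothesis is asserted where it actually needs proof.
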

\begin{proof}
	We first prove the result for $t=T$. Consider $\mathbf{h}(T)$, $\mathbf{w}(T)$, $\mathbf{B}_{-i}(T)$, $\mathbf{r}(T)$, $\mathbf{P}_{-i}(T)$ and $\boldsymbol{\rho}_{-i}(T)$ to be fixed. From \eqref{eq:atT}, as $B_i(T)$ increases, the constraint in \eqref{eq:batT} becomes relaxed and hence the optimal objective does not decrease with increasing  $B_i(T)$,   which is possible only if $\rho_i(T)$ and  $P_i(T)$ do not decrease.  	 This  implies that $\rho_i(T)$ and $P_i(T)$ are non-decreasing in  $B_i(T)$. We now prove the same for $t<T$ in the below. The time indices of $B_i$, $B_i'$, $r_i$ and $r_i'$ are $t-1$ and that of $P_i$, $P_i'$, $\rho_i$ and $\rho_i'$ are $t-1$ and that of $V$ is $t$. For brevity, we drop the time indices and replace all the fixed quantities by $(\cdot)$.  Suppose that $B'_i\geq B_i$ and $P'_i\geq P_i$ and $\rho'_i\geq \rho_i$ for all $i\in \mathcal{M}$. We have 
	\begin{align}
	&Q(B_i', P_i', \rho_i',\cdot)-	Q(B_i', P_i, \rho_i,\cdot )\nonumber\\
	&=\sum_{i=1}^{M}\mathbb{E}[w_i f(r_i-\rho_i' )]+ \mathbb{E}[V\left(r_i-\rho_i' ,B_i'-P_i',\cdot \right)]\nonumber\\
	&\;\;-\sum_{i=1}^{M}\mathbb{E}[w_i f(r_i-\rho_i )]+ \mathbb{E}[V\left(r_i-\rho_i ,B_i'-P_i,\cdot \right)]\nonumber\\
		&\leq \sum_{i=1}^{M}\mathbb{E}[w_i f(r_i-\rho_i' )]+ \mathbb{E}[V\left(r_i-\rho_i' ,B_i-P_i',\cdot \right)]\nonumber\\
	&\;\;-\sum_{i=1}^{M}\mathbb{E}[w_i f(r_i-\rho_i )]+ \mathbb{E}[V\left(r_i-\rho_i ,B_i-P_i,\cdot \right)],\label{eq:ineq}
	\end{align}
	where the inequality in \eqref{eq:ineq} follows because 
	\begin{flalign}\label{eq:Vin}
	&\mathbb{E}[V\left(r_i-\rho_i' ,B_i'-P_i',\cdot \right)]-\mathbb{E}[V\left(r_i-\rho_i ',B_i-P_i',\cdot \right)]&&\nonumber\\
	&\leq \mathbb{E}[V\left(r_i-\rho_i ,B_i'-P_i,\cdot \right)] - \mathbb{E}[V\left(r_i-\rho_i ,B_i-P_i,\cdot \right)], &&
	\end{flalign}
	which is due to the following.  Recall that $V(\cdot)$ is convex in $B-P$. Moreover, $\rho'_i$ is a concave increasing function of $P'_i$, because $\mathbf{P}_{-i}$ and  $\boldsymbol{\rho}_{-i}$ are fixed and $g(\cdot)$ is concave increasing function of its argument.  
	Hence, $V(r_i-\rho_i',\cdot )$ is a composition of a convex nondecreasing function and a convex function in $P_i'$, and therefore, it is convex in $P_i'$ \cite{boyd2004convex}.  Noting that a convex function satisfies, $c(y+\delta)-c(y)\geq  c(x+\delta)-c(x)$ for any $y\geq x$ and $\delta\geq 0$ and letting, $y = B-P$, $x = B - P'$ and $\delta = B'-B$, as in the proof of Theorem 3 in \cite{Rui-MDP}, we get   \eqref{eq:Vin}. 
From \eqref{eq:ineq} and using the fact that the optimal $P_i(t-1)$ and $\rho_i(t-1)$ are obtained by minimizing $Q(\cdot)$ in \eqref{eq:at_t}, it can be seen that if it is optimal to transmit at power $P_i(t-1)$ when battery is $B_i(t-1)$, the transmit power $P_i'(t-1) \geq P_i(t-1)$, when battery increases to $B'_i(t-1)$. 
Using the similar arguments,   we can prove that  $P_i(t-1)$ and $\rho_i(t-1)$ are non-decreasing in $r_i(t-1)$ for  fixed $B_i(t-1)$. 
\end{proof}

Based on Theorem~\ref{thm:convex} and Theorem~\ref{thm:monotonic}, the numerical evaluation of the Bellman equations can be simplified. This is because,  for solving \eqref{eq:at_t} for any $B_i'(t)> B_i(t)$, the search space of $P_i'(t)$ and $\rho_i'(t)$ can be constrained to be greater than or equal to $P_i(t)$ and $\rho_i(t)$, respectively, which are the optimal solutions when the state of the battery is $B_i(t)$, when all other quantities in the  state and action are  fixed.  The results in Theorem~\ref{thm:convex} and Theorem~\ref{thm:monotonic} can also be used for efficiently solving the problem via approximate dynamic programming. 

\subsection{NN-Based Online Policy Derived from  the Offline Policy}
We now propose a heuristic policy by training a neural network (NN) to output action values, $(\mathbf{P}, \boldsymbol{\rho})$ by taking the states $(\mathbf{B}, \mathbf{r},\mathbf{h}, \mathbf{w})$ as inputs for each time instant. 
For this, we first solve \eqref{eq:main-opt-problem} in the offline case for each realization of the sample paths, assuming non-causal  knowledge of version and energy arrivals and channel power gain realizations for all $t\in \{1,\ldots,T\}$.   In the offline case, \eqref{eq:main-opt-problem} is a convex optimization problem
and it can be solved using standard numerical techniques. We obtain the optimal offline solution, $\mathbf{P}(t)$, $\boldsymbol{\rho}(t)$ for $\mathbf{B}(t), \mathbf{r}(t),\mathbf{h}(t), \mathbf{w}(t)$ for each $t\in \{1,2,\ldots,T\}$, which we consider as the training data. 
We run the offline policy for $N$ times to obtain $NT$ data samples. 
We consider the problem of learning actions for  each state as a regression problem, for which we adopt the mean squared error between the output of the neural network and the target values as the loss function.   
When an NN is trained, during the inference phase,  $\mathbf{\tilde{P}}$ and $\boldsymbol{\tilde{\rho}}$ values output by the NN may not satisfy the constraints in \eqref{eq:capacity}, \eqref{eq:power} and \eqref{eq:bits}. From them, we derive $\mathbf{P}$ and $\boldsymbol{\rho}$ values that satisfy the constraints as follows: We take  $P_i =  \min(\max({\tilde{P}_i},0),{B}_i)$ and ${\rho_i} =  \min(\max({\tilde{\rho}_i},0),\log(1+{h}_i{P}_i))$ for all $i\in \mathcal{M}$. We then verify if the obtained $\mathbf{{P}}$ and $\boldsymbol{{\rho}}$ values satisfy achievable rate region constraints, failing to which, we scale the $\boldsymbol{\rho}$ values appropriately to ensure that the constraints are satisfied. 

\subsection{Greedy Policy}
In this policy, we solve   problem  \eqref{eq:main-opt-problem} at each time slot  optimally. Across slots,  the number of remaining bits to be transmitted and the amount of energy in the battery evolve according to \eqref{eq:bit-evolution} and  \eqref{eq:battery-evolution}, respectively, as in the earlier policies. 




\begin{figure}[t]
%
%
\begin{tikzpicture}[scale=0.65]

\begin{axis}[%
width=4.521in,
height=3.566in,
at={(0in,0in)},
scale only axis,
xmin=0,
xmax=1,
xtick={0,0.1,	0.2,	0.3,	0.4,	0.5,	0.6,	0.7,	0.8,	0.9,	1},
xticklabels={0,0.1,	0.2,	0.3,	0.4,	0.5,	0.6,	0.7,	0.8,	0.9,	1},
xlabel style={font=\color{white!15!black}},
xlabel={Probability of Energy Arrival, $e_{\rm prob}$},
ymin=0.4,
ymax=1.65,
ylabel style={font=\color{white!15!black}},
ylabel={Cost, $\frac{1}{TM}\sum_{t=1}^{T}\sum_{i=1}^{M}\mathbb{E}[w_i(t)f(r_i(t)-\rho_i(t))]$},
axis background/.style={fill=white},
title style={font=\bfseries},
title={},
legend style={legend cell align=left, align=left, draw=white!15!black}
]

\addplot[color=red, thick, mark=square, mark options={solid, red}, mark size = 3pt] table [x=eprob, y=offline, col sep=comma] {eprob.csv};
\addlegendentry{Optimal Offline Policy};

\addplot[color=blue, thick,mark=o, mark options={solid, blue}, mark size = 3pt] table [x=eprob, y=MDP, col sep=comma] {eprob.csv};
\addlegendentry{MDP-Based Online Policy (Discrete State \& Action Spaces)};
 
\addplot[color=magenta, thick,mark=+, mark options={solid, magenta}, mark size = 3pt] table [x=eprob, y=NN, col sep=comma] {eprob.csv};
\addlegendentry{NN-Based Online Policy};
 
\addplot[color=green, thick,mark=diamond, mark options={solid, green}, mark size = 3pt] table [x=eprob, y=Greedy, col sep=comma] {eprob.csv};
\addlegendentry{Greedy Policy};

\end{axis}
\end{tikzpicture}%
	\caption{Variation of the objective value of \eqref{eq:main-opt-problem} with the probability of energy arrival, $e_{\rm prob}$ for $p_{\rm prob}=0.4$ and $i_{\rm prob} = 0.4$. Due to discretization of the state and action spaces, the MDP-based online policy performs worse than the NN-based online policy, which is trained to provide outputs over a continuous action space.}
	\label{fig:eprob}
\end{figure}
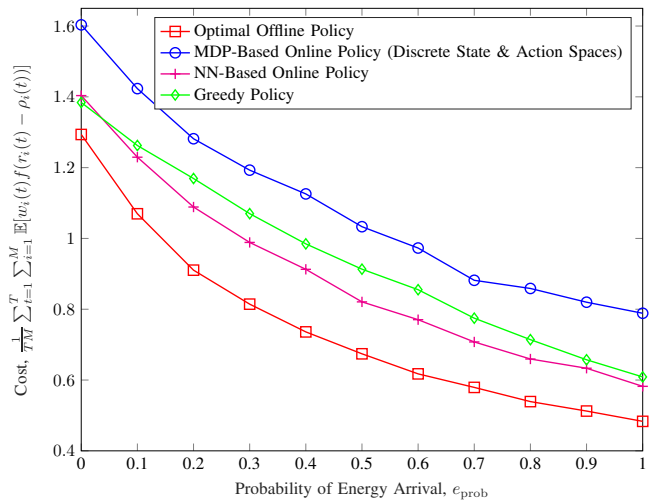 

\section{Numerical Results}\label{sec:num}
In this section, we present   numerical results for which we consider a  two-user case, with $M=2$. The channel power gains of both the users are assumed to be identically distributed over $\{0.1, 1\}$, where the probability of it being $0.1$ is equal to $0.4$.  In  a user, in a slot, a unit of energy arrives with probability $e_{\rm prob}$ or no energy arrives, and  a version carrying $4$ bits arrives with probability, $p_{\rm prob}$ or no version arrives.  When a version arrives, it has importance weight of $2$ and $1$  with probability, $i_{\rm prob}$ and $1-i_{\rm prob}$, respectively. 
For concreteness, we  consider $g(x)= \log(1+x)$ and $f(x) = \exp(x-4)$, motivated by the expressions for maximum achievable rate in an additive white Gaussian noise channel and the distortion incurred for a Gaussian source, respectively, in \cite{Cover}. Furthermore, we consider $B_{\rm max}=4$. For solving the MDP, we discretize the state of the battery, the transmit power and the number of transmitted bits  to take values in $\{0,1,2,3,4\}$ for both users.  

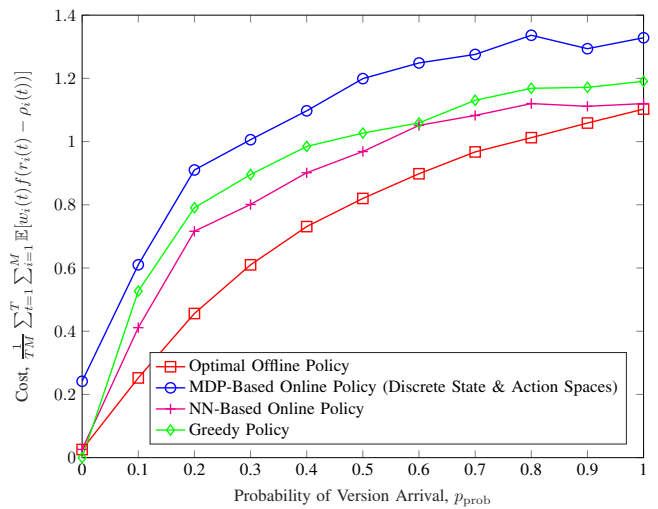
\begin{figure}[t]
%
%
\begin{tikzpicture}[scale=0.65]

\begin{axis}[%
width=4.521in,
height=3.566in,
at={(0in,0in)},
scale only axis,
xmin=0,
xmax=1,
xtick={0,0.1,	0.2,	0.3,	0.4,	0.5,	0.6,	0.7,	0.8,	0.9,	1},
xticklabels={0,0.1,	0.2,	0.3,	0.4,	0.5,	0.6,	0.7,	0.8,	0.9,	1},
xlabel style={font=\color{white!15!black}},
xlabel={Probability of Version  Arrival, $p_{\rm prob}$},
ymin=0,
ymax=1.4,
ylabel style={font=\color{white!15!black}},
ylabel={Cost, $\frac{1}{TM}\sum_{t=1}^{T}\sum_{i=1}^{M}\mathbb{E}[w_i(t)f(r_i(t) -\rho_i(t))]$},
axis background/.style={fill=white},
title style={font=\bfseries},
title={},
legend style={legend cell align=left, align=left, draw=white!15!black},
legend pos={south east}
]

\addplot[color=red, thick, mark=square, mark options={solid, red}, mark size = 3pt] table [x=pprob, y=offline, col sep=comma] {pprob.csv};
\addlegendentry{Optimal Offline Policy};

\addplot[color=blue, thick,mark=o, mark options={solid, blue}, mark size = 3pt] table [x=pprob, y=MDP, col sep=comma] {pprob.csv};
\addlegendentry{MDP-Based Online Policy (Discrete State \& Action Spaces)};

\addplot[color=magenta, thick,mark=+, mark options={solid, magenta}, mark size = 3pt] table [x=pprob, y=NN, col sep=comma] {pprob.csv};
\addlegendentry{NN-Based Online Policy};

\addplot[color=green, thick,mark=diamond, mark options={solid, green}, mark size = 3pt] table [x=pprob, y=Greedy, col sep=comma] {pprob.csv};
\addlegendentry{Greedy Policy};

\end{axis}
\end{tikzpicture}%
	\caption{Variation of the objective value of \eqref{eq:main-opt-problem} with the probability of version arrival, $p_{\rm prob}$   for $e_{\rm prob}=0.4$ and $i_{\rm prob} = 0.4$. }
	\label{fig:pprob}
\end{figure} 

\begin{table}[t]										
	\centering									
	\caption{The  objective value of \eqref{eq:main-opt-problem} for different probability of importance weight being equal to $2$, $i_{\rm prob}$, under different policies. }									
	\begin{tabular}{|c|p{1cm}|p{1cm}|p{1cm}|p{1cm}|}									
		\hline								
		$i_{\rm prob}$&  Optimal Offline  & NN-Based Online & MDP-Based Online  & Greedy       \\								
		\hline								
		$0$        &$0.52$ & $0.64$  &  $0.82$ &  $0.60$ \\								
		\hline								
		$0.2$      & $0.62$ & $0.76$  &  $0.96$ & $0.71$ \\								
		\hline								
		$0.4$      &  $0.73$  & $0.92$ & $1.15$ & $0.85$  \\								
		\hline								
		$0.6$      &  $0.84$ & $1.03$  &  $1.24$ & $0.98$ \\								
		\hline								
		$0.8$      &  $0.94$  & $1.16$  &  $1.39$ & $1.16$ \\								
		\hline								
		$1$        &  $1.06$ & $1.25$ & $1.44$ & $1.38$ \\								
		\hline								
	\end{tabular}									
	\label{tab:iprob}									
\end{table}

Using the above setup, in Fig.~\ref{fig:eprob} and Fig.~\ref{fig:pprob},  we present variation of the objective value in \eqref{eq:main-opt-problem} with the probability of energy and version arrivals, respectively. We also show the objective value of \eqref{eq:main-opt-problem} for different probability of importance weight being equal to $2$, $i_{\rm prob}$, under different policies in Table~\ref{tab:iprob}. From the table, as expected, the objective value of all the policies increases with  $i_{\rm prob}$. Similarly, from the figures, we note that as the probability of energy (version) arrival increases, the objective value decreases (increases) as expected. Moreover, the optimal offline policy performs better than the NN-based and MDP-based online policies and the greedy policy, as expected, due to the non-causal information about the realizations of random processes available in the offline policy. \emph{It may be surprising to observe that the performance of the MDP-based policy is worse than that of the greedy and  NN-based online policy that is trained to output actions for a given state. This is because, the MDP-based policy is obtained by discretizing the   state of the battery, the transmit power and the number of transmitted bits  to take values in $\{0,1,2,3,4\}$, for which it is optimal. However, the greedy and NN-based online policies take input states and output actions over continuous spaces, which is larger  than the set $\{0,1,2,3,4\}$. Such a result has also been observed in \cite{ICASSP_Mohit}.} This performance degradation due to discretization can be overcome by approximating the value functions, $V$, in \eqref{eq:at_t} by continuous functions and solving the problem via approximate dynamic programming. 

\section{Conclusions}\label{sec:conclusions}
\balance 
In this work, we considered energy harvesting users communicating version updates to a common access point over a fading multiple access channel.   
The version updates having random importance weights arrive at users and  when a new version arrives all the earlier versions become obsolete.
In this setting, we considered a minimization of a finite horizon expected average of the product of importance weight  and a convex increasing function of the number of un-transmitted bits, subject to an achievable rate-region constraint of the MAC and energy constraints at each user, so as to enable \emph{importance-aware} delivery of  as many bits, as soon as possible. 
We cast the problem in the MDP framework to obtain optimal online policy.  We showed that the optimal transmit power and the transmitted  number of bits are non-decreasing functions of the amount of energy stored in the battery of a user when all other state and action variables are fixed, and are non-decreasing functions of the number of remaining bits to be transmitted of a user, when all other state and action variables are fixed.  
We also obtained an NN-based heuristic online policy, that is trained using the optimal solution in the offline case when all the future realizations of the random processes are known non-causally. Via numerical simulations, we showed that the NN-based policy performs competitively with respect to the MDP-based online policy.

\bibliographystyle{ieeetr}
\bibliography{references}
	
\end{document}